  \newtheorem{definition}{Definition}
  \newtheorem{theorem}{Theorem}
  \newtheorem{lemma}{Lemma}
  \newtheorem{remark}{Remark}
  \newtheorem{corollary}{Corollary}
\newcommand{\QUESTION}[1]{}
\renewcommand{\algorithmicif}{\textbf{If}}
\renewcommand{\algorithmicthen}{\textbf{then}}
\renewcommand{\algorithmicelse}{\textbf{else}}
  \newcommand{\IFTHEN}[3][default]{\ALC@it\algorithmicif\ #2\
    \algorithmicthen\ #3\
    \ifthenelse{\boolean{ALC@noend}}{}{\algorithmicendif\ } \ALC@com{#1}}
  \newcommand{\IFTHENNOEND}[3][default]{\ALC@it\algorithmicif\ #2\
    \algorithmicthen\ #3\
    \ALC@com{#1}}
  \newcommand{\ONLYELSE}[1][default]{\ALC@it\algorithmicelse%
\ALC@com{#1}\begin{ALC@if}}
\newcommand{\N}{{\ensuremath{\mathbb N}}}
\newcommand{\T}{{\ensuremath{\mathcal T}}}
\newcommand{\jspace}[1]{}
\newcommand{\keywords}[1]{}
\jgdemail\url{Jean-Guillaume.Dumas@imag.fr}
\hhemail\url{Hossayni.Hicham@gmail.com}
\newcommand{\mytitle}{Matrix powers algorithms for trust evaluation in public-key infrastructures}
\title{\mytitle}
\newcommand{\myauthor}{Jean-Guillaume Dumas and Hicham Hossayni}
\author{Jean-Guillaume Dumas\thanks{Laboratoire J. Kuntzmann, Universit\'e de
Grenoble. 51, rue des Math\'ematiques, umr CNRS 5224, bp 53X, F38041
Grenoble, France, \jgdemail. }
\and Hicham Hossayni\thanks{CEA / L\'eti, 17 rue des Martyrs, 38054 Grenoble, France, \hhemail}}
\begin{document}
\maketitle

\begin{abstract}
This paper deals with the evaluation of trust in public-key infrastructures.
Different trust models have been proposed to interconnect the various PKI
components in order to propagate the trust between them.
In this paper we provide a new polynomial algorithm using linear algebra to
assess trust relationships in a network using different trust evaluation
schemes. 
The advantages are twofold: first the use of matrix computations instead of
graph algorithms provides an optimized computational solution; second, our
algorithm can be used for generic graphs, even in the presence of cycles. 
Our algorithm is designed to evaluate the trust using all existing (finite)
trust paths between entities as a preliminary to any exchanges between
PKIs. This can give a precise evaluation of trust, and accelerate for instance cross-certificate validation.
\end{abstract}

\keywords{Trust evaluation, Matrix powers, Spectral analysis of networks, Distributed PKI trust model.}

\section{Introduction}
The principle of a Public Keys Infrastructure (PKI) is to establish (using
certificates) a trust environment between network entities and thus guarantee
some security of communications.

For instance companies can establish hierarchical PKI's with a certification
authority (CA) signing all the certificates of their employees. In such a
setting there is a full trust between employees and their CA. Now if entities
with different PKI structures want to communicate, either they find a fully
certified path between them or they don't. In the latter case some degree of
trust has to be established between some disjoint entities.

Ellison and Schneier identified a risk of PKIs to be ``Who do
we trust, and for what?'' which emphasizes the doubts about the trust
relationship between the different PKI components
\cite{Ellison:2000:tenrisks}.
Several incidents, including the one in which VeriSign issued to a
fraudulent two certificates associated with Microsoft
\cite{Gomes:2001:verisign}, or even the recent fraudulent certificates for
Google emitted by DigiNotar \cite{Adkins:2011:diginotar}, confirm
the importance of the trust relationship in the trust models, see
also~\cite{Josang:2011:STM}.
This leads to the
need of a precise and a global evaluation of trust in PKI architectures.
Another approach would be to use some fully trusted keys or authorities,
like the Sovereign Keys or the Convergence
project\footnote{\url{https://www.eff.org/sovereign-keys},
  \url{http://convergence.io}}, or e.g. trust
lists~\cite{Rifa-Pous:2007:trustlists}.

For example in a cross-certification PKI, an entity called
Alice can establish a communication with another entity called Bob
only after validating Bob's certificate. For this, Alice must
verify the existence of a certification path between her trust anchor
and Bob's certification authority (CA).
This certificate validation policy imposes that each entity must have a
complete trust in their trust anchors, and that this trust anchor has a
complete (direct or indirect) trust relationship with other CAs.

In e.g. \cite{Guha:2004:PTD,Huang:2009:CTA,Huang:2010:FCT,Foley:2010:stm}
algorithms are proposed to quantify the trust
relationship between two entities in a network, using transitivity.
Some of them evaluate trust throughout a single path, while others
consider more than one path to give a better approximation of trust
between entities. However to the best of our knowledge they are
restricted to simple network {\em trees}.
In this paper we choose the last approach and use transitivity
to efficiently approximate global trust degree.
Our idea is to use an adapted power of the adjacency
matrix (used e.g. to verify the graph connectivity or to compute the
number of {\em finite} paths between nodes). This spectral approach is
similar to that used also e.g. for community detection in graphs
\cite{Estrada:2009:community} and we use it to produce a centralized
or distributed quantification of trust in a network. Moreover it
allows to deal with {\em any kind of graphs, without any restrictions
to trees nor dags}.

More generally, the aim of this paper is to propose a generic solution
for trust  evaluation adapted to different trust models.
The advantage of spectral analysis is twofold :
first the use of matrix computations instead of graph algorithms provides an
optimized computational solution in which the matrix memory management is more
adapted to the memory hierarchy; second, our algorithm can be used for
generic graphs, even in the presence of cycles.
Moreover, our algorithm is iterative so that a good approximation of the
global trust degrees can be quickly computed: this is done by fixing  the
maximum length of the trust paths that are considered.
The complexity of this algorithm is $O(n^3\cdot\varphi\cdot \ell)$ in the worst
case, polynomial in $n$, the number of entities (nodes of the graph), $\varphi$,
the number of trust relationships (edges), and $\ell$, the size of the longest
path between entities.
For instance the algorithm proposed in \cite{Huang:2009:CTA}
worked only for directed acyclic graphs (DAG) and required the approximate
resolution of the Bounded Disjoint Paths problem, known to be NP-Hard
\cite{Reiter:1998:RAU}. In case of DAGs the complexity of our
algorithm even reduces to $O(n\cdot\varphi\cdot \ell)$.

Our algorithm is designed to evaluate the trust using
all existing (finite) trust paths between entities as a preliminary
to any exchanges between PKIs.
This can give a precise evaluation of trust, and optimize the
certificate validation time.
These computations can be made, in a centralized manner by a trusty
independent entity, like Wotsap for a web of trust networks\footnote{\url{http://www.lysator.liu.se/~jc/wotsap/index.html}}, 
by CAs in the case of cross-certification PKI (e.g. via PKI Resource
Query Protocols~\cite{Pala:2008:Peaches}), or by the users themselves in
the case of PGP web of trust.
The latter can even happen in a distributed
manner~\cite{Dolev:2010:CMT} or with
collaborations~\cite{Pala:2010:PKS}.

Our algorithm works for generic trust metrics but can be more efficient when the
metrics form a ring so that block matrix algorithms can be used.
We thus present in section \ref{sec:metric} different possible trust metrics and
their aggregation in trust network.
We then present the transformation of DAG algorithms for the computation of
the aggregation into matrix algorithms in section \ref{sec:powers}.
We do this in the most generic setting, i.e. when even the dotproducts have to
be modified, and with the most generic trust metric (i.e. including trust,
uncertainty {\em and verified distrust}).
Finally, in section \ref{sec:cycles}, we present our new polynomial algorithm for
generic graphs.

\section{Transitive trust metrics}\label{sec:metric}
\subsection{The calculus of trust}
There are several schemes for evaluating the (transitive)
trust in a network. Some present the trust degree as a single value
representing the probability that the expected action will happen. The
complementary probability being an uncertainty on the trust.

Others include the
\emph{distrust} degree indicating the probability that the opposite of
the expected action will happen \cite{Guha:2004:PTD}.
More complete schemes can be introduced to
evaluate trust: J{\o}sang \cite{Josang:2007:PLUU} for instance introduced the
Subjective Logic notion which
expresses subjective beliefs about the truth of propositions with
degrees of "uncertainty".

\cite{Huang:2009:CTA,Huang:2010:FCT} also introduced a quite similar scheme
with a formal, semantics based, calculus of trust and applied it to
public key infrastructures (PKI).
We chose to present this metric for its generality and include in this section
some definitions and theorems taken from \cite{Huang:2009:CTA}.
The idea is
to represent trust by a triplet, (trust, distrust, uncertainty).
Trust is the proportion of experiences proved, or believed, positive. Distrust
is the proportion of experiences {\em proved negative}. Uncertainty is the proportion
of experiences with unknown character.

\begin{definition}[{\cite[\S 5.1]{Huang:2009:CTA}}]
Let $d$ be a trustor entity and $e$ a trustee.
Let $m$ be the total number of encounters between $d$ and $e$ in a given
context.
Let $n$ (resp. $l$) be the number of positive (resp. negative)
experiences among all encounters between $d$ and $e$.
\begin{itemize}
\item {\bf The trust degree} is defined as the frequency rate of
the trustor's positive experience among all encounters with the
trustee. That is,
$td(d,e)=\frac{n}{m}$.
\item {\bf The distrust degree}:  similarly we have
$dtd(d,e)=\frac{l}{m}$.
\item {\bf The uncertainty}: denoted by \textit{ud} is defined by:
$ud(d,e)=1-td(d,e)-dtd(d,e)$.
\end{itemize}
\end{definition}
In the following we will denote the {\bf trust relationship} by a triple
$tr(a,b)=\langle td(a,b)$, $dtd(a,b)$, $ud(a,b)\rangle$ or simply
$tr(a,b)=\langle td(a,b),dtd(a,b)\rangle$ since the uncertainty is completely determined by
the trust and distrust degrees.

In these definitions, the trust depends on the kind of expectancy, the
context of the experiences, type of trust (trust in belief,  trust in
performance), ..., see e.g. \cite{Huang:2010:FCT}.
For simplicity,
we only consider in the next sections the above generic concept of
trust.

\subsection{Aggregation of trust}
The main property we would like to express is {\em transitivity}. Indeed
in that case keys trusted by many entities, themselves highly trusted,
will induce a larger confidence.
In the following we will consider a trust graph representing the trust
relationships as triplets between entities in a network.
\begin{definition}[Trust graph]
Let $\T \subset [0,1]^3$ be a set of trust relationships.
Let $V$ be a set of entities of a trust network.
Let $E$ be a set of directed edges with weight in $\T$.
Then $G=(V,E,\T )$ is called a {\em trust graph} and there
is an edge between two vertices whenever there exist a
nonzero trust relationship between its entities.
\end{definition}

Next we define the transitivity over a path between entities and using
parallel path between them as sequential and parallel aggregations.
We first need to define a trust path:
\begin{definition}[Trust path] Let $G=(V,E,\T)$ be a trust graph.
A trust path between two entities $A_1 \in V$ and $A_n \in V$ is
defined as the chain, $A_1\overset{t_1}{\longrightarrow}
A_2\overset{t_2}{\longrightarrow}
... A_{n-1}\overset{t_{n-1}}{\longrightarrow} A_n$, where $A_i$ are
entities in $V$ and $t_i \in \T$ are respectively the trust degrees
associated to each trust relation
$(A_i\overset{t_i}{\longrightarrow}A_{i+1}) \in E$.
\end{definition}

The need of the sequential aggregation is shown by the following example.
Consider, as shown on figure \ref{fig:SeqAggr}, Alice trusting Bob with a certain
degree, and Bob trusting Charlie with a certain trust degree.
Now, if Alice wishes to communicate with Charlie, how can she evaluate her
trust degree toward him?
For this, we use the sequential aggregation of trust to help Alice to
make a decision,
and that is based on Bob's opinion about Charlie.

\begin{figure}[htbp]
\begin{minipage}{0.3\textwidth}\center
    \includegraphics*[width=0.9\textwidth, keepaspectratio=true]{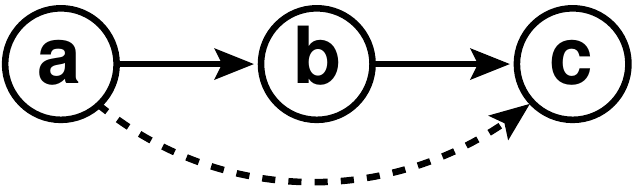}
\end{minipage}\hfill
\begin{minipage}{0.7\textwidth}
\begin{itemize}
\item[$\longrightarrow$]: Direct trust relationship
\item[$\dashrightarrow$]: Indirect (sequentially aggregated) trust relationship
\end{itemize}
\end{minipage}
    \caption{Simple sequential trust aggregation}\label{fig:SeqAggr}
\end{figure}

\begin{definition}[Sequential aggregation of
    trust {\cite[Theorem UT-1]{Huang:2009:CTA}}]\label{def:seqprop}
Let $G=(V,E,\T)$ be a trust graph.
Let $a$,$b$ and $c$ be three entities in $V$ and $tr(a,b)\in \T$,
$tr(b,c) \in \T$ be respectively the trust degrees associated to the
entity pairs $(a,b)$ and $(b,c)$. The {\em sequential aggregation of
  trust} between $a$ and $c$ is a function $f$, that calculates the
trust degree over the trust path $a\rightarrow b \rightarrow c$. It is
defined by :
\begin{align*}
f : \T \times \T \rightarrow \T& ~\text{\em with}~& f(tr(a,b),tr(b,c))&=tr_f(a,c)= \langle td_f(a, c),dtd_f(a, c)\rangle\\
  &\text{\em where}~~& td_f(a, c)&= td(a, b).td(b, c) + dtd(a, b).dtd(b, c)\\
  && dtd_f(a, c)&= dtd(a, b).td(b, c) + td(a, b).dtd(b, c)\end{align*}
\end{definition}

\begin{lemma}\label{lem:seqsound}
Definition \ref{def:seqprop} is sound. 
\end{lemma}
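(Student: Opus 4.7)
The plan is to unpack what soundness means here: the function $f$ is declared to have codomain $\T\subset[0,1]^3$, so I must verify that for any inputs $tr(a,b),tr(b,c)\in\T$, the output triple $\langle td_f(a,c),\,dtd_f(a,c),\,ud_f(a,c)\rangle$ actually lies in $\T$. Concretely, since the third coordinate is defined by $ud_f(a,c)=1-td_f(a,c)-dtd_f(a,c)$, it is enough to show that (i) $td_f(a,c)\geq 0$, (ii) $dtd_f(a,c)\geq 0$, and (iii) $td_f(a,c)+dtd_f(a,c)\leq 1$. Together (i)--(iii) imply that all three coordinates of the output lie in $[0,1]$ and sum to $1$.

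Claims (i) and (ii) are immediate: each of $td_f(a,c)$ and $dtd_f(a,c)$ is a sum of products of nonnegative numbers, since $td$ and $dtd$ values are in $[0,1]$ by definition. So the only real content is (iii).

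For (iii), the key observation is that the four terms appearing in $td_f(a,c)+dtd_f(a,c)$ factor:
\begin{align*}
td_f(a,c)+dtd_f(a,c)
 &= td(a,b)\,td(b,c)+dtd(a,b)\,dtd(b,c)\\
 &\quad + dtd(a,b)\,td(b,c)+td(a,b)\,dtd(b,c)\\
 &= \bigl(td(a,b)+dtd(a,b)\bigr)\bigl(td(b,c)+dtd(b,c)\bigr).
\end{align*}
Since $tr(a,b)\in\T$, the defining relation $ud(a,b)=1-td(a,b)-dtd(a,b)\in[0,1]$ yields $td(a,b)+dtd(a,b)\leq 1$, and likewise $td(b,c)+dtd(b,c)\leq 1$. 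The product of two quantities in $[0,1]$ lies in $[0,1]$, proving (iii).

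There is no real obstacle here; the only small trick is recognizing the factorization that turns a sum of four bilinear terms into a single product, after which the bound $\leq 1$ is a direct consequence of the probabilistic interpretation built into the definition of $\T$. Once (i)--(iii) are established, the triple lies in $\T$ and $f$ is well-defined as stated. \qed
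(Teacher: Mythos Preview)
Your proof is correct and follows essentially the same route as the paper: both hinge on the factorization $td_f+dtd_f=(td(a,b)+dtd(a,b))(td(b,c)+dtd(b,c))$ (equivalently, the paper writes $z_u=1-(1-x_u)(1-y_u)$), from which the required bounds follow. If anything, your version is slightly more careful in explicitly noting the nonnegativity of $td_f$ and $dtd_f$, which the paper leaves implicit.
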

\begin{proof}
We consider the trust path $a\overset{x}{\to} b \overset{y}{\to}c$ and let
$x=\langle x_t,x_d,x_u \rangle$, $y=\langle y_t,y_d,y_u\rangle$ and $z=\langle
z_t,z_d,z_u\rangle=f(x,y)$. 
Then
$z_u=1-x_ty_t-x_dy_d-x_ty_d-x_dy_t=1-x_t(y_t+y_d)-x_d(y_t+y_d)=1-(x_t+x_d)(y_t+y_d)=1-(1-x_u)(1-y_u)$.
Since $0\leq x_u\leq 1$ and $0<leq y_u\leq 1$, we have that 
$0\leq (1-x_u)(1-y_u) \leq 1$ so that $0\leq z_u=x_u+y_u\leq 1$. Since $x_u$ and
$y_u$ are positive by definition, it follows that $f(x,y)$ {\em is} a trust
relationship.
\end{proof}

From the definition we also for instance immediately get the following
properties.
\begin{lemma}\label{lem:sequnincr}
$f$ (Sequential aggregation) increases uncertainty.
\end{lemma}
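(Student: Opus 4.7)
The plan is to reuse the identity for the output uncertainty already derived in the proof of Lemma \ref{lem:seqsound}, namely
\[
z_u \;=\; 1-(1-x_u)(1-y_u),
\]
where $x=\langle x_t,x_d,x_u\rangle$, $y=\langle y_t,y_d,y_u\rangle$ and $z=f(x,y)=\langle z_t,z_d,z_u\rangle$. The statement ``$f$ increases uncertainty'' will be interpreted (as is natural for a sequential aggregation of trust) as $z_u\geq x_u$ and $z_u\geq y_u$: the uncertainty of the composed trust is at least as large as the uncertainty of each of the two links that were aggregated.

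First I would expand the product to rewrite the identity in the additive form
\[
z_u \;=\; x_u+y_u-x_u y_u.
\]
From this, the two desired inequalities fall out directly by factoring:
\[
z_u-x_u \;=\; y_u(1-x_u),\qquad z_u-y_u \;=\; x_u(1-y_u).
\]
Since $x_u,y_u\in[0,1]$ (being components of valid trust relationships, as already used in Lemma \ref{lem:seqsound}), both $1-x_u$ and $1-y_u$ are in $[0,1]$, so the right-hand sides are nonnegative. This gives $z_u\geq \max(x_u,y_u)$, which is the claim.

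There is essentially no obstacle here: the only subtlety is making sure we are allowed to invoke the closed form $z_u=1-(1-x_u)(1-y_u)$, but this was derived in Lemma \ref{lem:seqsound} using only the definition of $f$ and the relation $t+d+u=1$ for trust triples, so it is available. If a stronger monotonicity statement were intended (for example strict increase whenever $x_u,y_u\in(0,1)$), the same factored expressions immediately give it, since then $y_u(1-x_u)>0$ and $x_u(1-y_u)>0$.
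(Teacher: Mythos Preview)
Your proof is correct and follows essentially the same approach as the paper: both invoke the identity $z_u=1-(1-x_u)(1-y_u)$ from Lemma~\ref{lem:seqsound}, rewrite it so that $z_u-x_u=y_u(1-x_u)$ and $z_u-y_u=x_u(1-y_u)$, and conclude nonnegativity from $x_u,y_u\in[0,1]$. The paper also makes the same remark about strict increase when both uncertainties are nonzero.
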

\begin{proof}
As in the proof of lemma \ref{lem:seqsound}, we let $x=\langle x_t,x_d,x_u \rangle$, $y=\langle y_t,y_d,y_u\rangle$ and
$z=f(x,y)$. From $z_u=1-(1-x_u)(1-y_u)$ we have that $z_u=x_u+y_u(1-x_u)$.
Therefore as $y_u$ and $1-x_u$ are positive we in turn have that $z_u\geq x_u$.
We also have $z_u=y_u+x_u(1-y_u)$ and therefore also $z_u\geq y_u$. Moreover we
see that if both uncertainties are non zero, then the uncertainty of $f(x,y)$ is
strictly increasing.
\end{proof}
\begin{lemma}[{\cite[Property 2]{Huang:2010:FCT}}]
$f$ (Sequential aggregation) is associative
\end{lemma}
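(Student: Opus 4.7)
The plan is to avoid a brute-force double-expansion of $f(f(x,y),z)$ and $f(x,f(y,z))$ (each of which would produce eight trilinear terms per coordinate) by diagonalizing the operation. Concretely, I would introduce the change of coordinates $\phi(\langle t,d\rangle) = (t+d,\, t-d) \in \R^2$, which is clearly injective on trust relationships (the inverse is $t = (u+v)/2$, $d = (u-v)/2$).

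Next I would compute $\phi(f(x,y))$ in terms of $\phi(x)$ and $\phi(y)$. Writing $z = f(x,y) = \langle z_t, z_d\rangle$ with $z_t = x_t y_t + x_d y_d$ and $z_d = x_d y_t + x_t y_d$, a one-line calculation gives
\begin{align*}
z_t + z_d &= (x_t+x_d)(y_t+y_d), \\
z_t - z_d &= (x_t-x_d)(y_t-y_d),
\end{align*}
so $\phi(f(x,y)) = \phi(x)\odot\phi(y)$, where $\odot$ denotes componentwise multiplication in $\R^2$. In other words, $\phi$ turns sequential aggregation into the pointwise product of two real numbers.

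From there associativity is immediate: componentwise multiplication in $\R^2$ is associative because multiplication in $\R$ is, hence
\begin{equation*}
\phi(f(f(x,y),z)) = (\phi(x)\odot\phi(y))\odot\phi(z) = \phi(x)\odot(\phi(y)\odot\phi(z)) = \phi(f(x,f(y,z))).
\end{equation*}
Since $\phi$ is injective, I can conclude $f(f(x,y),z) = f(x,f(y,z))$.

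I do not anticipate a real obstacle here; the only substantive step is spotting the ``diagonalizing'' change of variables $\phi$, and once that is written down everything reduces to the associativity of $\R$. If an anonymous reviewer preferred a direct approach, the alternative would simply be to expand both iterated expressions and match the eight resulting monomials coordinate by coordinate, which is routine but less illuminating.
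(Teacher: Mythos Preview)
Your proof is correct and takes a genuinely different route from the paper. The paper simply writes out $f(f(x,y),z)$ explicitly as $\langle (x_ty_t+x_dy_d)z_t+(x_dy_t+x_ty_d)z_d,\,(x_dy_t+x_ty_d)z_t+(x_ty_t+x_dy_d)z_d\rangle$ and then regroups each coordinate by hand to recognize it as $f(x,f(y,z))$---exactly the brute-force match of eight monomials per coordinate that you mention as the fallback. Your diagonalization via $\phi(\langle t,d\rangle)=(t+d,\,t-d)$ is more conceptual: it identifies $f$ with componentwise multiplication in $\R^2$ (equivalently, it exhibits the sequential aggregation as multiplication in the commutative algebra $\R[X]/(X^2-1)\cong\R\times\R$), so associativity---and commutativity, and the identity $\langle 1,0\rangle$---are inherited for free. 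The paper's approach is self-contained and requires no insight; yours explains \emph{why} the operation is associative and would scale immediately if one wanted further algebraic properties, at the small cost of introducing and justifying the change of variables.
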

\begin{proof}
We consider the trust path $a\overset{x}{\to} b \overset{y}{\to}
c\overset{z}{\to} d$ and let $x=\langle x_t,x_d\rangle$, $y=\langle y_t,y_d\rangle$, $z=\langle z_t,z_d\rangle$.
Then $f(x,y)=\langle x_ty_t+x_dy_d,x_dy_t+x_ty_d\rangle$ and
$f(y,z)=\langle y_tz_t+y_dz_d,y_dz_t+y_tz_d\rangle$, so that 
$f(f(x,y),z)=\langle
(x_ty_t+x_dy_d)z_t+(x_dy_t+x_ty_d)z_d,(x_dy_t+x_ty_d)z_t+(x_ty_t+x_dy_d)z_d\rangle$.
In other words, $f(f(x,y),z)=\langle
x_t(y_tz_t+y_dz_d)+x_d(y_dz_t+y_tz_d),x_d(y_tz_t+y_dz_d)+x_t(y_dz_t+y_dz_d)\rangle=f(x,f(y,z))$.
\end{proof}
From this associativity, this sequential aggregation function can be applied
recursively to any tuple of values of $\T$, to evaluate the sequential
aggregation of trust over any trust path with any length $\ge 2$, for instance
as follows: $f(t_1, ..., t_n) = f(f(t_1, ..., t_{n-1}), t_n)$.

Now, the following definition of the parallel aggregation function can also be
found in {\cite[$\S$ 7.2.2]{Huang:2010:FCT}}, it is clearly associative and is
illustrated on figure \ref{fig:ParAggr}.  
\begin{definition}[Parallel aggregation of trust~{\cite[\S 6.2]{Huang:2009:CTA}}]\label{def:parprop}
Let $G=(V,E,\T)$ be a trust graph.
Let $a, b_1,\ldots, b_n, c$ be entities in $V$
and $tr_i(a,c)\in\T$ be the trust degree over the
trust path $a \rightarrow b_i \rightarrow c$ for all $i\in {1..n}$.
The parallel aggregation of trust is a function $g$, that calculates
the trust degree associated to a set of disjoint trust paths
connecting the entity a to the entity c. It is defined by:
\begin{align*}
    g: \T^n \rightarrow \T& ~\text{\em with}~ g([tr_1,tr_2,\ldots,tr_n](a,c))=tr_g(a,c) = \langle td_g(a,c),dtd_g(a,c)\rangle\\
  \text{\em where}~~& td_g(a,c)=1-\prod_{i=1..n}{(1-td_i)} ~\text{\em and}~ dtd_g(a,c)=\prod_{i=1..n}{dtd_i}\end{align*}
\end{definition}
\begin{figure}[htbp]
\begin{minipage}{0.3\textwidth}\center
    \includegraphics*[width=0.8\textwidth, keepaspectratio=true]{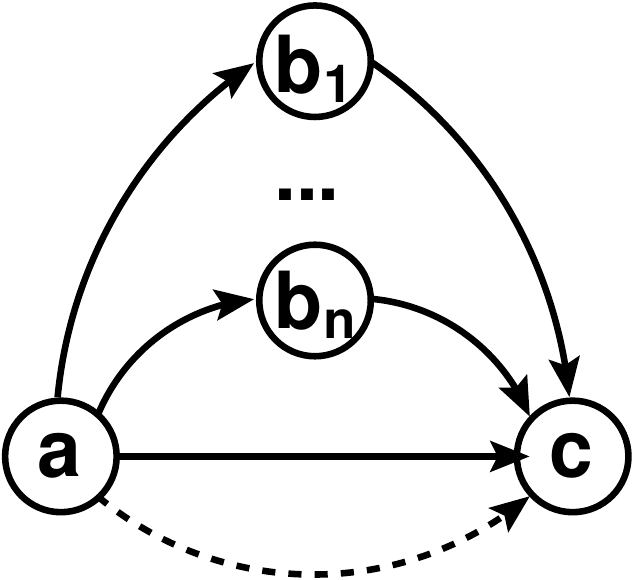}
\end{minipage}\hfill
\begin{minipage}{0.7\textwidth}
\begin{itemize}
\item[$\longrightarrow$]: Direct trust relationship
\item[$\dashrightarrow$]: Indirect (parallely aggregated) trust relationship
\end{itemize}
    \caption{Parallel aggregation of trust for multiple trust}\label{fig:ParAggr}
\end{minipage}
\end{figure}
\begin{lemma}Definition \ref{def:parprop} is sound.
\end{lemma}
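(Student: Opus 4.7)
The plan is to check that the triple produced by $g$ is a valid element of $\T$: that is, both coordinates $td_g(a,c)$ and $dtd_g(a,c)$ lie in $[0,1]$, and that the implicit uncertainty $ud_g(a,c) = 1 - td_g(a,c) - dtd_g(a,c)$ is also in $[0,1]$, which amounts to showing $td_g(a,c)+dtd_g(a,c)\le 1$.

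First I would dispatch the easy coordinates. Since each $td_i\in[0,1]$, every factor $(1-td_i)$ lies in $[0,1]$, so their product lies in $[0,1]$ and hence $td_g(a,c)=1-\prod_i(1-td_i)\in[0,1]$. Symmetrically, each $dtd_i\in[0,1]$ gives $dtd_g(a,c)=\prod_i dtd_i\in[0,1]$ at once.

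The only substantive step is the sum bound. Writing $tr_i=\langle td_i,dtd_i,ud_i\rangle$, the fact that $tr_i\in\T$ means $ud_i\ge 0$, equivalently $dtd_i\le 1-td_i$. As both sides are nonnegative, taking the product over $i$ preserves the inequality, giving
\[
\prod_i dtd_i \;\le\; \prod_i (1-td_i),
\]
which rearranges exactly to $td_g(a,c)+dtd_g(a,c)\le 1$, i.e. $ud_g(a,c)\ge 0$. Combined with $ud_g(a,c)\le 1$ (immediate from $td_g, dtd_g\ge 0$), this shows $g([tr_1,\ldots,tr_n](a,c))\in\T$.

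The only place where anything could go wrong is precisely this sum bound; the individual coordinate bounds are automatic from $\T\subset[0,1]^3$, but the constraint $td+dtd\le 1$ has to be promoted from each factor to the product, and that is what the termwise inequality $dtd_i\le 1-td_i$ delivers.
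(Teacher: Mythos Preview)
Your proof is correct and follows essentially the same approach as the paper: both hinge on the termwise inequality $dtd_i \le 1 - td_i$ and multiply to obtain $\prod_i dtd_i \le \prod_i (1 - td_i)$, which is exactly $td_g + dtd_g \le 1$. The only cosmetic difference is that the paper states the case $n=2$ explicitly and then appeals to induction, whereas you handle general $n$ directly in one product step.
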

\begin{proof}
We start by proving it for two paths of length $1$. 
Let $x=\langle x_t,x_d\rangle$ and $y=\langle y_t,y_d\rangle$. Then
$g(x,y)=\langle 1-(1-x_t)(1-y_t),x_dy_d\rangle$. It is clear that $td_g$ and $dtd_g$ are non
negative. Now from the definition of trust relationship we know that $x_t+x_d
\leq 1$ and $y_t+y_d \leq 1$ so that $x_d \leq 1-x_t$ and $y_d \leq
y_t-1$. 
Therefore $x_dy_d \leq (1-x_t)(1-y_t)$ and $td_g(x,y) + dtd_g(x,y) \leq 1$. This
generalizes smoothly to any number of paths by induction.
\end{proof}

\begin{definition}[Trust evaluation]\label{def:TrustEval}
Let $G(V,E,\T)$ be a directed acyclic trust graph, and let a and b be two
nodes in $V$.
The trust evaluation between a and b is the trust aggregation over
{\em all} paths connecting $a$ to $b$. It is computed recursively by
aggregating (evaluating) the trust between the entity a and the
predecessors of b (except, potentially, $a$). Denote by $Pred(b)$
the predecessors of $b$ and by $p_i$ the elements of $Pred(b)\setminus\{a\}$.
The {\em trust evaluation} between $a$ and $b$ consists in first recursively
evaluating the trust over all paths $a\rightarrow \ldots \rightarrow p_i$, 
then applying the sequential aggregation over the paths 
$a\rightarrow p_i\rightarrow b$ and finally the parallel aggregation to the
results (and $(a\to b)$, if $(a\to b) \in E$).
\end{definition}
\begin{remark}
Note that since the predecessors of $b$ are distinct, after the sequential
aggregations all the resulting edges from $a$ to $b$ are distinct. They are
thus disjoint paths between $a$ and $b$ and parallel aggregation applies.
\end{remark}

\begin{remark}
In the above definition of trust evaluation we favor the evaluation from right
to left. As shown on the example below this gives in some sense prominence to
nodes close to the beginning of the path, that is nodes closer to the one asking
for an evaluation. This is illustrated on figure \ref{fig:distrib} where to
different strategies for the evaluation of trust are shown: on the left, one
with parallel, then sequential, aggregation; the other one with sequential, then
parallel, aggregation. 
\begin{figure}[htbp]
\begin{minipage}{\textwidth}
\begin{minipage}{0.3\textwidth}
From left to right, we get $f(a,g(f(b,c),d))$
\end{minipage}\hfill
\begin{minipage}{0.35\textwidth}
\begin{center}
\includegraphics[width=0.9\textwidth,keepaspectratio=true]{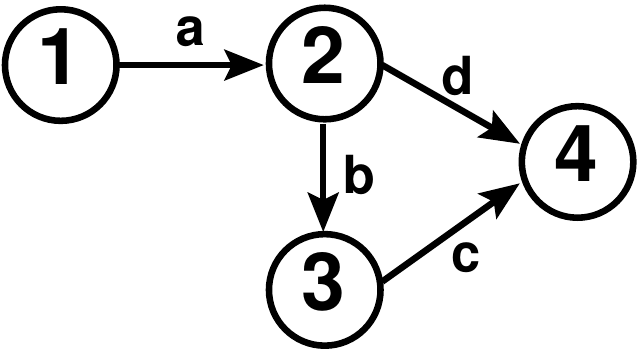}
\end{center}
\end{minipage}\hfill
\begin{minipage}{0.3\textwidth}
From right to left (definition \ref{def:TrustEval}), we get $g(f(a,b,c),f(a,d) )$
\end{minipage}
\end{minipage}
\caption{Two strategies for trust evaluation between node $1$ and
  $4$.}\label{fig:distrib}
\end{figure}
Would $f$ be distributive over $g$ we would get the same evaluation. As we will
see in section~\ref{sec:cycles}, this choice of evaluation can has an important
impact in the presence of cycles.
\end{remark}

The graph theoretic method proposed by \cite[\S 6.3]{Huang:2009:CTA} for
evaluating trust between two nodes in a DAG requires the
approximate solution of the Bounded Disjoint Paths problem, known to be
NP-Hard \cite{Reiter:1998:RAU}.
This algorithm has two steps: first an elimination of cycles via BDP, then a composition of sequential and then parallel aggregation.
We will show in the next section that our matrix algorithm produces the same
output on acyclic graphs. Moreover, in section~\ref{sec:cycles}, we will present
a variant of this algorithm, still with polynomial time bound, that directly
deals with generic graphs.

By storing the already evaluated relationships, the aggregation part of \cite[\S
6.3]{Huang:2009:CTA} can for instance compute the global trust in a graph with
$1000$ vertices and $250\,000$ edges in about $1$ minute on a standard laptop.
In the following, we propose to rewrite this algorithm in terms of linear
algebra~\cite{Orman:2010:wits}. Using sparse linear algebra the overall complexity will not change, and the analysis will be eased.
Now if the graph is close to complete, the trust matrix will be close to dense
and cache aware linear algebra block algorithms will be more suited.

In both cases, the linear algorithm will decompose the evaluation into
converging iterations which could be stopped before exact convergence in order
to get a good approximation faster. 

Furthermore, using this linear algebra point of view, we will we able to
generalize the algorithm to any directed graph.

\section{Matrix Powers Algorithm For Directed Acyclic Graphs}\label{sec:powers}
In the previous section, we presented the trust propagation scheme introduced
by \cite{Huang:2009:CTA}, which consists of using the parallel and sequential
trust aggregations for evaluating the trust between a network's entities.
Indeed, our matrix powers algorithm can be implemented with different trust
propagation schemes under one necessary condition: the transitivity
property of the (sequential and parallel) trust propagation formulas.
In this section, we propose a new algorithm for evaluating
trust in a network using the powers of the matrix of trust.
This algorithm uses techniques from graph connectivity
and communicability in networks
\cite{Estrada:2009:community}.

\subsection{Matrix and monoids of trust}
\begin{definition}\label{def:tmat}
Let $G=(V,E,\T)$ be a trust graph,
the {\em matrix of trust} of $G$, denoted by C, is the adjacency matrix
containing, for each node of the graph ,G the trust degrees of a node toward its
neighbors, $C_{ij} =\langle td(i,j),dtd(i,j),ud(i,j)\rangle$.
When there is no edge between $i$ and $j$, we choose $C_{ij}=\langle 0,0,1\rangle$ and, since every entity is fully confident in
itself, we also choose for all $i$: $C_{ii}=\langle 1,0,0\rangle.$
\end{definition}
\begin{definition}
Let $\T$ be the set $\T=\{\langle x,y,z\rangle\in [0,1]^3, x+y+z=1\}$, equipped with two
operations "+" and "." such that 
$\forall (\langle a,b,u\rangle,\langle c,d,v\rangle)\in \T^2$ we have:
$\langle a,b,u\rangle.\langle c,d,v\rangle=\langle ac+bd,ad+bc,1-ac-ad-bd-bc\rangle$,
and
$\langle a,b,u\rangle+\langle c,d,v\rangle=\langle a+c-ac,bd,(1-a)(1-c)-bd\rangle$.
We define as the {\em monoids of trust} the monoids $(\T, +, \langle 0,1,0\rangle)$
and $(\T,., \langle 1,0,0\rangle)$.
\end{definition}

$\langle 0,0,1\rangle$ is the absorbing element of "." in $\T$.
This justifies a posteriori our choice of representation for the
absence of an edge between two nodes in definition \ref{def:tmat}.

We can also see that the set $\T$ corresponds to trust degrees
$\langle td,dtd,ud\rangle$. In addition, the operations "." and "+" represent
respectively the sequential and parallel aggregations of trust, denoted $f$ and
$g$ in definitions~\ref{def:seqprop}~and~\ref{def:parprop}.

\begin{remark}\label{rq:distrib}
Note that "." is {\em not} distributive over "+".
This fact can prevent the use of block algorithms and fast matrix methods. 
Now if the simpler metric without distrust is used (i.e. distrust is fixed to
zero for every entity in the graph, and {\em will remain zero} all along our trust
algorithms), then "." becomes distributive over "+". 
Thus in the following
section we will present timings with or without taking distrust into
consideration.
\end{remark}

\subsection{$d$-aggregation of trust}
\begin{definition}[$d$-aggregation of trust]
For $d\in \N$, the {\em $d$-aggregation of trust} between two nodes
$A$ and $B$, in an acyclic trust graph, is the trust evaluation over
{\em all paths of length at most $d$}, connecting $A$ to $B$. It is
denoted $d\text{-agg}_{A,B}$.
\end{definition}

\begin{definition}[Trust vectors product]
Consider the directed trust graph $G=(V,E,\T)$ with trust matrix $C$.
Let  $\overrightarrow{C_{i*}}$ be the $i$-th row vector and
$\overrightarrow{C_{*j}}$ be the $j$-th column vector.
We define the product of $\overrightarrow{C_{i*}}$ by $\overrightarrow{C_{*j}}$ by: \[\overrightarrow{C_{i*}}.\overrightarrow{C_{*j}}=\sum^{k\ne j}_{k\in V}{C_{ik}.C_{kj}}\]
\end{definition}
Note that $C_{ii}=C_{jj}=\langle 1,0,0\rangle$ is the neutral element for ".". Therefore, our
definition differs from the classical dot product as we have
removed one of the $C_{ij}=C_{ii}\cdot C_{ij}=C_{ij}\cdot C_{jj}$, but then it
matches the $2$-aggregation:
\begin{lemma}\label{thm:cartesian}
The product  $\overrightarrow{C_{i*}}.\overrightarrow{C_{*j}}$ is the
$2$-aggregated trust between $i$ and~$j$.
\end{lemma}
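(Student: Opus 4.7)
The plan is to unfold the defined product $\overrightarrow{C_{i*}} \cdot \overrightarrow{C_{*j}} = \sum_{k \in V,\ k \ne j} C_{ik} \cdot C_{kj}$ and to match each summand with the trust carried by a specific path of length at most $2$ from $i$ to $j$, then to invoke the interpretation of the monoid operations: "$\cdot$" as sequential aggregation ($f$) and "$+$" as parallel aggregation ($g$).

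First I would isolate the index $k = i$. Using the convention $C_{ii} = \langle 1,0,0\rangle$ from the matrix of trust, and the fact that $\langle 1,0,0\rangle$ is the neutral element of "$\cdot$" on $\T$, the corresponding summand is $C_{ii} \cdot C_{ij} = C_{ij}$, that is, the trust carried by the unique path of length $1$ from $i$ to $j$ (the direct edge, or $\langle 0,0,1\rangle$ if no such edge exists).

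Next, for every $k \in V \setminus \{i, j\}$, the summand $C_{ik} \cdot C_{kj}$ is, by the very definition of "$\cdot$" on $\T$, the sequential aggregation $f\bigl(tr(i,k), tr(k,j)\bigr)$ along the length-$2$ path $i \to k \to j$. Distinct intermediate vertices produce $i\to\ldots\to j$ paths that share only the endpoints $i$ and $j$, i.e.\ internally disjoint paths, which is exactly the setting in which $g$ applies. Combining the $k=i$ contribution together with all $k \in V\setminus\{i,j\}$ contributions under "$+$" therefore realizes $g$ applied to the direct edge and to every length-$2$ path, and this is by definition the $2$-aggregated trust between $i$ and $j$.

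The only delicate point, and what I expect to be the main obstacle, is to handle indices $k$ for which the length-$2$ path $i \to k \to j$ does not actually exist, so that one of $C_{ik}$ or $C_{kj}$ equals $\langle 0,0,1\rangle$. By the absorbing property of $\langle 0,0,1\rangle$ for "$\cdot$" noted after the definition of the matrix of trust, those summands collapse to $\langle 0,0,1\rangle$, and one must check that such "ghost" contributions behave as neutral terms in the parallel aggregation, so that the full sum indeed reduces to the aggregation over the genuinely existing paths of length at most $2$. Once this is established, the equality between the dot product and the $2$-aggregation follows by a direct term-by-term identification.
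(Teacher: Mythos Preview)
Your approach mirrors the paper's proof almost exactly: isolate the $k=i$ term as the direct edge via $C_{ii}=\langle 1,0,0\rangle$, interpret each $k\notin\{i,j\}$ term as the sequential aggregation along $i\to k\to j$, handle non-existent paths through the absorbing element $\langle 0,0,1\rangle$ of ``$\cdot$'', and then read the outer sum as parallel aggregation. The paper organizes precisely the same case analysis.

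However, the step you single out as ``the main obstacle'' is a genuine one and cannot be discharged as you hope. The element $\langle 0,0,1\rangle$ is \emph{not} neutral for ``$+$'': the neutral element of the additive monoid is $\langle 0,1,0\rangle$, and one computes directly
\[
\langle a,b,u\rangle + \langle 0,0,1\rangle \;=\; \langle a,\,0,\,1-a\rangle,
\]
which annihilates the distrust component whenever $b\neq 0$. Hence the full sum over all $k\neq j$ does not in general reduce to the parallel aggregation over only the genuinely existing paths of length at most~$2$. The paper's own proof glosses over exactly this point---it simply asserts in its final line that the product equals $\sum_{k\in Pred(j)\setminus\{i\}} C_{ik}\cdot C_{kj} + C_{ij}$ without justification---so your proposal is in fact no less complete than the original argument; you have merely located precisely where both are fragile. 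The identification is unproblematic only in the distrust-free setting, where every second coordinate vanishes.
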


\begin{proof}
We prove first that $C_{ik}.C_{kj}$ is the sequential
aggregation of trust between $i$ and $j$ throughout all the paths (of
length$\le 2$)  $i\to k\to j$ with $k \in V$.
Let $k$ be an entity in the network. There are two cases: whether $k$ is
one of the boundaries of the path or not.
The first case is:  $k=i$ or $k=j$:
\begin{itemize}
\item if $k=i$, then from the trust matrix definition \ref{def:tmat},
$C_{ii}=\langle 1,0,0\rangle\forall i$, thus we have
 $C_{ik}.C_{kj}=C_{ii}.C_{ij}=\langle 1,0,0\rangle.C_{ij}=C_{ij}.$
\item if $k=j$, then similarly
$C_{ik}.C_{kj}=C_{ij}.C_{jj}=C_{ij}.\langle 1,0,0\rangle=C_{ij}$
\end{itemize}
Therefore $C_{ik}.C_{kj}$ corresponds to the [~sequential aggregation
of ] trust between $i$ and $j$ throughout the path $(i,j)$ of length
$1$.
This is why in the product, we added the constraint $k\ne j$ in the
sum to avoid taking $C_{ij}$ twice into account.
Now the second case is:  $k\ne i$ and $k\ne j$:
\begin{itemize}
\item if $k$ belongs to a path of length $2$ connecting $i$ to $j$,
  then: $i$ trusts  $k$ with degree $C_{ik}\ne \langle 0,0,1\rangle$, and $k$ trusts
  $j$ with degree $C_{kj}\ne \langle 0,0,1\rangle$.
  From definition~\ref{def:seqprop}, $C_{ik}.C_{kj}$ corresponds to the
  sequential aggregation of trust between $i$ and $i$ throughout the
  path $i\to k\to j.$
\item If there is no path of length $2$ between $i$ and $j$ containing
  $k$, then we have $C_{ik}=\langle 0,0,1\rangle$ or $C_{kj}=\langle 0,0,1\rangle$, and
  thus $C_{ik}.C_{kj}=\langle 0,0,1\rangle$ is also the aggregation of
  trust between $i$ and $j$ on the path traversing the node $k$.
\end{itemize}
Finally, we can deduce that
$\overrightarrow{C_{i*}}.\overrightarrow{C_{*j}}=\sum^{k \ne j}_{k\in V}{C_{ik}.C_{kj}}$
corresponds to the parallel aggregation of trust between $i$ and $j$ using all paths of length $\le 2$, which is the $2$-aggregated trust between $i$ and $j$.
Note that the latter is equivalent to
$\overrightarrow{C_{i*}}.\overrightarrow{C_{*j}}=\sum_{
k\in Pred.(j)\setminus\{i\}
}{C_{ik}.C_{kj}}+C_{ij}$.
\end{proof}

\begin{definition}[Trust matrix product]\label{def:matprod}
Let $C_{(ij)}$ and $M_{(ij)}$ be two trust matrices. We define the matrix product  $N=C*M$ by: $\forall i,j\in \{1..n\}$
\begin{equation*}
N_{ij}= \begin{cases}
\overrightarrow{C_{i*}}.\overrightarrow{M_{*j}}=\displaystyle\sum^{k\ne j}_{k \in V}{C_{ik}.M_{kj}} & \text{ if $i\ne j$}
\\
\langle 1,0,0\rangle & \text{ otherwise}
\end{cases}
\end{equation*}
\end{definition}

\begin{lemma}\label{cor:twobytwo}
Let $(C_{ij})$ be the trust matrix of a network of entities,
whose elements belong to a trust graph $G$. The matrix
$M$ defined by: $M=C^2=C*C$ represents the $2$-aggregated trust
between all distinct entity pairs and the total trust of entities for
themselves. \end{lemma}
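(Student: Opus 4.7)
The plan is to observe that this corollary is essentially a direct combination of Definition~\ref{def:matprod} with Lemma~\ref{thm:cartesian}, so the proof reduces to unpacking notation and handling the diagonal case separately.

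First, I would split on whether $i=j$ or $i\neq j$. For the diagonal entries, Definition~\ref{def:matprod} declares $M_{ii}=\langle 1,0,0\rangle$ by fiat, which matches the convention from Definition~\ref{def:tmat} that every entity has full trust in itself; this gives the ``total trust of entities for themselves'' clause directly, with no computation required.

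For the off-diagonal entries, I would apply Definition~\ref{def:matprod} to write $M_{ij}=\overrightarrow{C_{i*}}\cdot\overrightarrow{C_{*j}}$ and then invoke Lemma~\ref{thm:cartesian}, which states precisely that this trust vector product equals the $2$-aggregated trust between $i$ and $j$. Since Lemma~\ref{thm:cartesian} has already been proved by enumerating the contribution of every intermediate node $k$ (handling the boundary cases $k=i$ and $k=j$ via the neutral element $\langle 1,0,0\rangle$, and the genuine length-$2$ paths via sequential aggregation, then combining by parallel aggregation), nothing more is needed at this step.

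There is no real obstacle: the lemma is a corollary of Lemma~\ref{thm:cartesian} lifted from a single pair $(i,j)$ to the whole matrix, together with the diagonal convention. The only thing worth commenting on is the self-consistency between the diagonal choice in Definition~\ref{def:matprod} and the $2$-aggregation semantics: one should note that every path from $i$ back to $i$ of length $\leq 2$ either goes through an intermediate $k$ (contributing at most uncertainty $\langle 0,0,1\rangle$-style terms that do not raise self-confidence) or uses the self-loop $C_{ii}=\langle 1,0,0\rangle$, so fixing $M_{ii}=\langle 1,0,0\rangle$ is both the natural convention and consistent with the transitive interpretation. With this remark the statement of the corollary follows immediately.
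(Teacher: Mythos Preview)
Your proposal is correct and follows essentially the same approach as the paper: split on $i=j$ versus $i\neq j$, use Definition~\ref{def:matprod} directly for the diagonal, and invoke Lemma~\ref{thm:cartesian} for the off-diagonal entries. Your additional remark about the semantic consistency of the diagonal convention is extra commentary not present in the paper's (very terse) proof, but it does not change the argument.
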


\begin{proof}
From definition \ref{def:matprod}, we have: 
$M_{ij}= \begin{cases}
\overrightarrow{C_{i*}}.\overrightarrow{C_{*j}}=\displaystyle\sum^{k\ne j}_{k \in V}{C_{ik}.C_{kj}} & \text{ if $i\ne j$}
\\
\langle 1,0,0\rangle & \text{ otherwise}
\end{cases}
$.\\
Thus, if $i=j$, then $M_{ii}=\langle 1,0,0\rangle$ as claimed. Otherwise, $i\ne j$ and
according to lemma \ref{thm:cartesian}, $M_{ij}$ the $2$-aggregated trust
between $i$~and~$j$.
\end{proof}

Now, according to definition \ref{def:TrustEval} of the trust evaluation in
a network, we can generalize lemma \ref{cor:twobytwo} to evaluate
trust using all paths of a given length:
\begin{theorem}\label{thm:length}
Let $G=(V,E,\T)$ be an acyclic trust graph with matrix of trust~$C$.
Then $C^d$ represents the $d$-aggregated trust
between all entity pairs in $V$.
\end{theorem}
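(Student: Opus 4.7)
The plan is to prove Theorem~\ref{thm:length} by induction on $d \geq 1$, matching the recursive structure of the $d$-aggregation with the recursive structure of the matrix product $C^d = C^{d-1} * C$.

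For the base case I would take $d=2$, which is exactly Lemma~\ref{cor:twobytwo} (and the trivial case $d=1$ follows directly from Definition~\ref{def:tmat}: the $(i,j)$-entry of $C$ is the trust on the direct edge, and $C_{ii}=\langle 1,0,0\rangle$). For the inductive step, assume $C^{d-1}$ represents the $(d-1)$-aggregated trust for all entity pairs. Apply Definition~\ref{def:matprod}: for $i=j$ we get $(C^d)_{ii}=\langle 1,0,0\rangle$ by fiat, which is consistent. For $i\neq j$, expand
\[
(C^d)_{ij} \;=\; \sum_{k\neq j}\, (C^{d-1})_{ik}\,.\,C_{kj}.
\]

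The main work is to argue, term by term, that this sum realizes the recursive trust evaluation of Definition~\ref{def:TrustEval} with parameter $d$. I would split the sum by the role of $k$: (i) $k=i$ contributes $(C^{d-1})_{ii}.C_{ij} = \langle 1,0,0\rangle.C_{ij} = C_{ij}$, which correctly injects the direct edge $(i\to j)$ if it exists (and $\langle 0,0,1\rangle$ otherwise); (ii) $k\notin\mathrm{Pred}(j)\cup\{i\}$ contributes $(C^{d-1})_{ik}.\langle 0,0,1\rangle=\langle 0,0,1\rangle$, the neutral element for ``$+$'', so these terms are harmless; (iii) $k\in\mathrm{Pred}(j)\setminus\{i\}$ contributes the sequential aggregation of the $(d-1)$-aggregated trust from $i$ to $k$ (by induction) with the edge $k\to j$, which is precisely the trust over all paths $i\to\cdots\to k\to j$ of length at most $d$ whose penultimate vertex is $k$. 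Summing these with ``$+$'' is the parallel aggregation prescribed by Definition~\ref{def:TrustEval}, and the disjointness required for parallel aggregation is guaranteed because the final edges $k\to j$ are distinct for distinct $k$ (cf.\ the remark after Definition~\ref{def:TrustEval}).

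The main obstacle I expect is twofold. First, I must make sure every path of length $\le d$ from $i$ to $j$ is counted \emph{exactly once}: each such path has a unique penultimate vertex $k\neq j$, so it is attributed to exactly one term of the sum, and the convention $C_{ii}=\langle 1,0,0\rangle$ together with the exclusion $k\neq j$ prevents the length-$1$ edge from being counted twice. Second, I need acyclicity in a nontrivial way: if there were a cycle through $j$, then the prefix $i\to\cdots\to k$ captured by $(C^{d-1})_{ik}$ could traverse $j$, making the concatenated path $i\to\cdots\to j\to\cdots\to k\to j$ revisit $j$ and no longer be a simple path. In a DAG this cannot occur, since $k\in\mathrm{Pred}(j)$ forbids any directed path from $j$ to $k$. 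I would flag here that we cannot simplify by invoking distributivity (Remark~\ref{rq:distrib}): the argument must remain at the level of path-by-path bookkeeping rather than algebraic rearrangement, which is exactly why the proof proceeds by matching the recursive definitions rather than by manipulating the matrix power algebraically.
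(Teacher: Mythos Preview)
Your proof is essentially the paper's: induction on $d$ with base case Lemma~\ref{cor:twobytwo}, and in the inductive step the sum $\sum_{k\neq j}(C^{d-1})_{ik}\cdot C_{kj}$ is split into $k=i$ (recovering the direct edge $C_{ij}$ via $C^{d-1}_{ii}=\langle 1,0,0\rangle$), $k\in\mathrm{Pred}(j)\setminus\{i\}$ (the recursive contributions from Definition~\ref{def:TrustEval}), and $k\notin\mathrm{Pred}(j)$ (absorbed terms), exactly the chain of equalities the paper writes.

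One slip to flag in your case~(ii): you assert that $\langle 0,0,1\rangle$ is ``the neutral element for $+$'', but by the paper's own monoid definition the neutral of $(\T,+)$ is $\langle 0,1,0\rangle$, and indeed $\langle 0,0,1\rangle + \langle c,d,v\rangle = \langle c,0,1-c\rangle$, which kills the distrust component. The paper's proof makes the same silent extension of the summation range from $\mathrm{Pred}(j)$ to $V\setminus\{j\}$ without justifying it, so your argument faithfully mirrors the paper; but the specific reason you give for why the extra terms are ``harmless'' is not literally correct and would need a different justification (or an acknowledgement that the $d$-aggregation is being \emph{defined} to include these full-uncertainty contributions, as the proof of Lemma~\ref{thm:cartesian} effectively does).
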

\begin{proof}
We proceed by induction. Let $HR(d)$ be the hypothesis that $C^d$
represents the $d$-aggregated trust between all entity pairs in $V$.
Then $HR(2)$ is true from lemma \ref{cor:twobytwo}.
Now let us suppose that $HR(d)$ is true.
First if $i=j$, then $(d+1)\text{-agg}_{i,i}=\langle 1,0,0\rangle=C^{d+1}_{i,i}$ from
definition \ref{def:matprod}.
Second, definition \ref{def:TrustEval},
gives:
\begin{align*}
(d+1)\text{-agg}_{i,j} & = \sum_{k\in Pred.(j)\setminus
  \{i\}}{d\text{-agg}_{ik} . C_{kj}}+C_{ij} & \text{by
  def. \ref{def:TrustEval}}\\
 & = \sum_{k\in Pred.(j)\setminus
  \{i\}}{C^d_{ik} . C_{kj}}+C_{ij} & \text{by
  $HR(d)$}\\
 & =  \sum_{k\in Pred.(j)}{C^d_{ik} . C_{kj}} &
 \text{since $C_{ii}=\langle 1,0,0\rangle$}\\
 & =  \sum_{k\in V}^{k\ne j}{C^d_{ik} . C_{kj}} &\text{if $(ik)\notin E$},~
 \text{$C_{ik}=\langle 0,0,1\rangle$}
\end{align*}
Overall, $HR(d+1)$ is proven and induction proves the theorem.
\end{proof}

From this theorem, we immediately have that in an acyclic graph the
matrix powers must converge to a fixed point.

\begin{corollary}\label{cor:cv}
Let $G=(V,E,\T)$ be an acyclic trust graph with trust matrix $C$. The successive
application of the matrix powers of $C$ converges to a matrix $C^\ell$,
where $\ell$ is the size of the longest path in $G$. Which means that $C^\ell$
is a fixed point for the matrix powers:
$$\lim\limits_{n \to \infty} C^{n} = C^{l} $$
\end{corollary}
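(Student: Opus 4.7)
The plan is to derive the corollary directly from Theorem~\ref{thm:length} by an elementary path-counting observation: in an acyclic graph there are only finitely many directed paths, and all of them have length at most $\ell$, the length of the longest path. Under this observation, the set of paths of length at most $d$ stabilises as soon as $d\geq \ell$, and so does the $d$-aggregated trust.

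More concretely, I would proceed in three short steps. First, I would recall that in a finite acyclic directed graph every path is simple (no vertex is repeated), so its length is bounded by $|V|-1$; in particular the quantity $\ell$ is well-defined and finite, and there exists no path of length strictly greater than $\ell$ between any two vertices. Second, by Theorem~\ref{thm:length}, $C^d_{ij}$ equals the $d$-aggregated trust $d\text{-agg}_{i,j}$, obtained by sequentially and parallelly aggregating trust over all paths from $i$ to $j$ of length at most $d$. Third, for every $d\geq \ell$ the set of such paths coincides with the set of paths of length at most $\ell$, because the additional potential paths (those of length between $\ell+1$ and $d$) are empty. Hence the aggregation is performed over the same family of paths, yielding $d\text{-agg}_{i,j}=\ell\text{-agg}_{i,j}$ for all $i,j\in V$. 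Invoking Theorem~\ref{thm:length} once more gives $C^d=C^\ell$ for every $d\geq \ell$.

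From $C^{\ell+1}=C^\ell$ the fixed-point property $C*C^\ell=C^\ell$ follows, and by a trivial induction $C^{\ell+k}=C^\ell$ for all $k\geq 0$, so the sequence $(C^n)_{n\in\N}$ is eventually constant and therefore convergent to $C^\ell$. I do not anticipate a genuine obstacle: the only point that deserves a careful sentence is the acyclicity-based bound $\ell\leq |V|-1$ guaranteeing that $\ell$ is finite, after which the argument is just set equality of the path families. Everything substantive has already been done inside Theorem~\ref{thm:length}.
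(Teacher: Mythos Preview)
Your argument is correct and is actually more economical than the paper's own proof. You deduce the corollary directly from Theorem~\ref{thm:length}: since $G$ is acyclic, there are no paths of length exceeding $\ell$, hence for every $d\geq\ell$ the family of paths of length $\leq d$ coincides with the family of paths of length $\leq\ell$, so the $d$-aggregation equals the $\ell$-aggregation and $C^d=C^\ell$.

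The paper proceeds differently. It does \emph{not} invoke Theorem~\ref{thm:length} at all; instead it runs a fresh induction directly on the recursive product of Definition~\ref{def:matprod}, parametrised by the pairwise longest-path length $\lambda_{i,j}$ (via Lemma~\ref{lem:maxpath}). The inductive hypothesis is that every entry $C^t_{ij}$ with $\lambda_{i,j}\leq d$ has already stabilised at $C^{\lambda_{i,j}}_{ij}$ for all $t\geq\lambda_{i,j}$; the inductive step unfolds $C^{d+1+u}_{ij}=\sum_{k\in\mathrm{Pred}(j)}C^{d+u}_{ik}\cdot C_{kj}$ and uses that each predecessor $k$ satisfies $\lambda_{i,k}\leq d$. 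What this buys is an argument that works purely at the level of the matrix recursion, without appealing to the informal phrase ``trust evaluation over all paths of length at most $d$'' in the definition of $d$-aggregation. Your route, by contrast, leans on reading that definition literally as a function of the path set; this is perfectly legitimate given how the paper states it, and it makes the corollary an almost immediate consequence of Theorem~\ref{thm:length}, which is where the real work was done.
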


For the proof of this corollary, we need the following lemma.
\begin{lemma}\label{lem:maxpath}
Let $\lambda_{i,j}$ be the length of the largest path between $i$ and
$j$. Then either $\lambda_{i,j}=1$ or $\lambda_{i,j}=max_{k\in Pred.(j)\setminus\{i\}}(\lambda_{i,k})+1$.
\end{lemma}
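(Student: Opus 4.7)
The plan is to argue directly from the structure of paths in a directed acyclic graph (which is the ambient setting of Section~\ref{sec:powers}, ensuring that longest paths exist and that no vertex is revisited). A path from $i$ to $j$ has either length exactly~$1$, in which case it must be the edge $(i,j)$ itself, or length at least $2$, in which case it factors uniquely as a path from $i$ to the penultimate vertex $k$, followed by the edge $(k,j)$. Crucially, in a DAG no vertex on a path can be repeated, so this penultimate vertex $k$ is a predecessor of $j$ and satisfies $k\neq i$ (otherwise the path would start and end at $i$ before reaching $j$, forcing a cycle).

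With that decomposition in hand, I would prove the two inequalities. For ``$\leq$'', I would take a longest path $P$ from $i$ to $j$ realizing $\lambda_{i,j}$; assuming $\lambda_{i,j}\geq 2$, its penultimate vertex $k$ lies in $Pred.(j)\setminus\{i\}$, and the prefix of $P$ ending at $k$ has length $\lambda_{i,j}-1$, so $\lambda_{i,k}\geq \lambda_{i,j}-1$, i.e., $\lambda_{i,j}\leq \max_{k\in Pred.(j)\setminus\{i\}}(\lambda_{i,k})+1$. For ``$\geq$'', I would take any $k\in Pred.(j)\setminus\{i\}$ and any longest path $Q$ from $i$ to $k$. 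Again using acyclicity, $j$ does not occur on $Q$ (otherwise we would form a directed cycle through $j$ via the edge $k\to j$), so appending the edge $(k,j)$ to $Q$ produces a valid simple path from $i$ to $j$ of length $\lambda_{i,k}+1$. Taking the maximum over such $k$ yields $\lambda_{i,j}\geq \max_{k\in Pred.(j)\setminus\{i\}}(\lambda_{i,k})+1$.

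Combining the two inequalities gives the stated recurrence whenever $\lambda_{i,j}\geq 2$, and the base case $\lambda_{i,j}=1$ is the remaining alternative. The only subtlety, which I view as the main point rather than a true obstacle, is the careful use of acyclicity at two places: to rule out $k=i$ as the penultimate vertex (ensuring we may restrict to $Pred.(j)\setminus\{i\}$), and to guarantee that extending a longest $i\to k$ path by the edge $k\to j$ remains a simple path. Both rely on the fact that the excerpt has already restricted attention to acyclic trust graphs for the results of this section.
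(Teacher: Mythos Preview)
The paper states this lemma without proof, treating it as an elementary graph-theoretic fact needed for the proof of Corollary~\ref{cor:cv}. Your argument is correct and supplies exactly the standard justification the paper omits: decompose a longest $i\to j$ path at its penultimate vertex, and use acyclicity both to force that vertex into $Pred.(j)\setminus\{i\}$ and to ensure that extending a longest $i\to k$ path by the edge $k\to j$ remains a (simple) path. There is nothing to compare against, and your write-up is the natural one.
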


\begin{proof}[Proof of corollary \ref{cor:cv}]
We prove the result by induction and consider the hypothesis $HR(d)$
with $d\ge 1$:
\[ \forall i,j \in V,~\text{such  that}~ \lambda_{i,j} \le d
~\text{and}~
\forall t \ge \lambda_{ij}, ~\text{then}~ C^t_{ij} = C^{\lambda_{i,j}}_{ij}
\]
We first prove the hypothesis for $d=1$:
Let $i$ and $j$ be such that the longest path between them is of
length $1$. This means that in this acyclic directed graph, there is
only one path between $i$ and $j$, the edge $i \to j$. Now from
definition \ref{def:TrustEval}, we have that
$\forall t, C^t_{ij}=  \sum_{k\in Pred.(j) \setminus
  \{i\}}{C^{t-1}_{ik}.C_{kj}}+C_{ij}$. However, $Pred.(j)\setminus \{i\} =
\emptyset$ so that $C^t_{ij}= C_{ij}$, for all
$t$. This proves $HR(1)$.

Now suppose that $HR(d)$ is true.
Let $i$ and $j$ be two vertices in $G(V,E,\T)$.
We have two cases.
First case: $\lambda_{ij} \le d$. Then $\lambda_{i,j} \le d+1$
and from the induction hypothesis, we have that
$C^t_{ij} = C^{\lambda_{ij}}_{ij} \forall t
\ge\lambda_{ij}$. Therefore $HR(d+1)$ is true for $i$ and $j$.
Second case: $\lambda_{ij} = d+1\ge 2$. Then we have $\forall u \ge 0$
    $C^{d+1+u}_{ij} =  \sum_{k\in Pred.(j)}{C^{d+u}_{ik}.C_{kj}}$.
    Now, from lemma \ref{lem:maxpath}, the maximum length of any path
    between $i$ and a predecessor of $j$ is
    $\lambda_{i,j}-1=d$. Therefore, from the induction hypothesis, we
    have that $C^{d+u}_{ik}=C^{\lambda_{ik}}_{ik}=C^d_{ik}$
    for all $k\in Pred.(j)$.
    Then $C^{d+1+u}_{ij} =  \sum_{k\in
      Pred.(j)}{C^{d}_{ik}.C_{kj}} = C^{d+1}_{ij}$ which proves
    the induction and thus the corollary.
\end{proof}

From the latter corollary, we now have an algorithm to compute the
trust evaluation between all the nodes in an acyclic trust network:
perform the trust matrix powering with the monoids laws up to the
longest path in the graph.
\begin{theorem}
Let $(C_{ij})$ be the trust matrix corresponding to an acyclic graph
with $n$ vertices and $\varphi$ edges whose longest path is of size $\ell$.
The complexity of the evaluation of the aggregated trust between all
entity pairs represented by this trust matrix is bounded by $O(n\cdot
\varphi \cdot \ell)$ operations.
\end{theorem}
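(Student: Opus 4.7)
The plan is to invoke Corollary \ref{cor:cv}, which reduces the computation of all pairwise aggregated trusts to forming the single matrix power $C^\ell$, and then to bound the cost of each successive multiplication by $C$ by exploiting its sparsity.

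My strategy is to compute $C^\ell$ iteratively as $M_1 = C$ and $M_{d+1} = M_d * C$ for $d = 1, \ldots, \ell - 1$, always multiplying on the right by the original trust matrix so that one factor of every product retains only $\varphi$ nonzero off-diagonal entries. Using Definition \ref{def:matprod}, for $i \ne j$ we have
\[
(M_d * C)_{ij} \;=\; \sum_{k \ne j} (M_d)_{ik} \cdot C_{kj},
\]
and since $C_{kj} = \langle 0,0,1\rangle$ is absorbing for the monoid operation ``.'' whenever $(k,j) \notin E$, only the indices $k \in Pred.(j)$ contribute to the sum --- exactly as already used in the final step of the proof of Theorem \ref{thm:length}. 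Hence each off-diagonal entry of $M_{d+1}$ can be obtained in $O(|Pred.(j)|)$ monoid operations, while the diagonal entries are fixed to $\langle 1,0,0\rangle$ for free.

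Summing the per-entry cost over all pairs $(i,j)$ with $i \ne j$, the cost of a single product $M_d * C$ is therefore bounded by
\[
\sum_{i \in V} \sum_{j \in V} |Pred.(j)| \;=\; n \cdot \sum_{j \in V} |Pred.(j)| \;=\; n \cdot \varphi
\]
monoid operations. Performing the $\ell - 1$ products needed to reach $C^\ell$ yields the announced bound $O(n \cdot \varphi \cdot \ell)$.

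The main pitfall to avoid --- and the only nonroutine point --- is the temptation to accelerate by repeated squaring, i.e.\ to compute $C, C^2, C^4, \ldots$. Although $C$ itself is sparse, its higher powers generally are not, so a dense-times-dense product in the trust monoids would cost $\Theta(n^3)$ per iteration and the total would degrade to $O(n^3 \log \ell)$, losing the linear dependence on $\varphi$. The linear iteration above sidesteps this by keeping the sparse factor $C$ on the right throughout all $\ell - 1$ steps, which is precisely what turns the naive $O(n^3 \ell)$ bound into the sharper $O(n \cdot \varphi \cdot \ell)$ claimed in the theorem.
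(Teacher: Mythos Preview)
Your proof is correct and follows essentially the same approach as the paper: iterate $\ell-1$ sparse multiplications by $C$, each costing $O(n\varphi)$ because only the $\varphi$ nonzero entries of $C$ contribute, and invoke Corollary~\ref{cor:cv} (and Theorem~\ref{thm:length}) to conclude that $C^\ell$ already gives the full aggregated trust. Your argument is in fact more detailed than the paper's---you spell out the absorbing role of $\langle 0,0,1\rangle$ and the predecessor sum, and add a useful remark on why repeated squaring would be inferior---but the underlying strategy is identical.
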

\begin{proof}
$C$ is sparse with $\varphi$ non zero element. Thus multiplying $C$ by a vector
requires $O(\varphi)$ operations and computing $C \times C^i$
requires $O(n\varphi)$ operations. Then, theorem
\ref{thm:length} shows that $C^j$ for $j\geq \ell$ is the
$j$-aggregated trust between any entity pair. Finally, corollary
\ref{cor:cv} shows that $C^j=C^\ell$ as soon as $j\geq \ell$.
\end{proof}

The implementation of this algorithm took less than $1$ second
to perform an iteration ($C^2$) on the graph of section \ref{sec:metric}
with $1000$ vertices and $250K$ edges. And it needed less than $6$ seconds to return the final trust degrees. 

\section{Evaluation of trust in the presence of cycles}\label{sec:cycles}
The algorithm induced by theorem \ref{thm:length} works only for directed
acyclic graphs. Its advantage is thus restricted to the case when the distrust
is {\em not} taken into consideration: then block or sparse algorithms can
provide the BLAS\footnote{e.g. ATLAS~\cite{Whaley:2001:AEO},
  GotoBLAS~\cite{Goto:2008:blas}, MUMPS~\url{http://graal.ens-lyon.fr/MUMPS}
  etc.} linear algebra performance to trust evaluation.

Now, in the presence of cycles in a network, the matrix powers
algorithm will add the contribution of each edge of a cycle indefinitely.

Consider the graph of figure \ref{fig:TM12345}, with
$a,b,c,d$ the trust degrees corresponding to the links $1\overset{a}{\to}
2\overset{b}{\to} 3 \overset{c}{\to} 4 \overset{d}{\to} 2$.
Its trust matrix $C$ and applications of the matrix powers algorithm on this matrix are shown on figure \ref{fig:TM12345}, right.

\begin{figure}[htbp]
\begin{minipage}{0.35\textwidth}
\begin{center}
\includegraphics[width=0.9\textwidth,keepaspectratio=true]{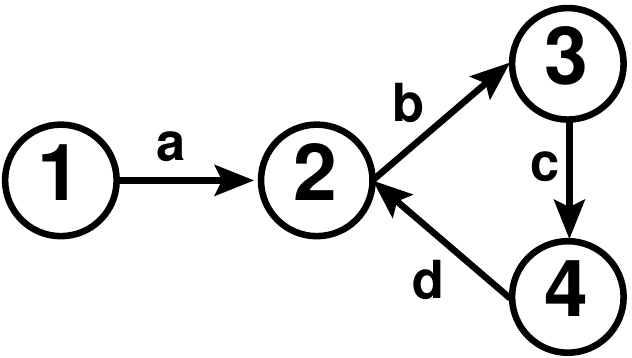}

\end{center}

\end{minipage}
\begin{minipage}{0.65\textwidth}\small
\hfill

\begin{tabular}{|c|c|c|c|}
\hline
1 & a & 0 & 0  \\ \hline
0 & 1 & b & 0  \\ \hline
0 & 0 & 1 & c  \\ \hline
0 & d & 0 & 1  \\ \hline
 \end{tabular}
\hfill
\begin{tabular}{|c|c|c|c|}
\hline
1 & a & a.b & 0  \\ \hline
0 & 1 & b & b.c  \\ \hline
0 & c.d & 1 & c  \\ \hline
0 & d & d.b & 1  \\ \hline
 \end{tabular}
\hfill
\begin{tabular}{|c|c|c|c|}
\hline
1 & a & a.b & a.b.c  \\ \hline
0 & 1 & b & b.c  \\ \hline
0 & c.d & 1 & c  \\ \hline
0 & d & d.b & 1  \\ \hline
 \end{tabular}\\

\hfill
\begin{tabular}{|c|c|c|c|}
\hline
1 & a+ a.b.c.d & a.b & a.b.c  \\ \hline
0 & 1 & b & b.c  \\ \hline
0 & c.d & 1 & c  \\ \hline
0 & d & d.b & 1  \\ \hline
 \end{tabular}
\hfill
\begin{tabular}{|c|c|c|c|}
\hline
1 & a +  a.b.c.d& (a+a.b.c.d).b & a.b.c  \\ \hline
0 & 1 & b & b.c  \\ \hline
0 & c.d & 1 & c  \\ \hline
0 & d & d.b & 1  \\ \hline
 \end{tabular}
\end{minipage}
\caption{Graph with one cycle and its trust matrix $C$ with $C^2$, $C^3$, $C^4$
  and $C^5$.}\label{fig:TM12345}
\end{figure}

For instance, the value:
\begin{equation*}\begin{split}
C^5_{1,3}&=0+C^4_{1,2}C_{2,3}+0=(C_{1,2}+C^3_{1,4}C_{4,2}) C_{2,3} \\
&= (C_{1,2}+(C^2_{1,3}C_{3,4})C_{4,2}) C_{2,3} =
(C_{1,2}+(C_{1,2}C_{2,3}) C_{3,4}) C_{4,2}) C_{2,3} \\
&= (a + a.b.c.d).b,
\end{split}\end{equation*}corresponds to
the aggregation on the paths $1\to 2\to 3$ and
$1\to 2\to 3\to 4\to 2\to 3$ linking 1 to 3. If we continue iterations
for $n > 5$, we find that the algorithm re-evaluates infinitely the trust on the
loop $3\to 4\to 2\to 3$ to yield 
$C^{2+3k}_{1,3}=(a+ C^{2+3{k-1}}_{1,3}.b.c.d).b$ with most probably an increase in uncertainty (e.g. from the many sequential aggregations and lemma \ref{lem:sequnincr}).

\subsection{Convergent iteration}
To solve this issue, we propose to change the matrix multiplication
procedure, so that {\em each edge will be used only once in the assessment
of a trust relationship}. For this, we use a memory matrix $R_{ij}$.
This stores, for each pair of nodes, all edges traversed to evaluate
their trust degree. Only the paths containing an edge not already traversed
to evaluate the trust degree are taken into account at the following iteration.
Therefore, the computation of $C^{\ell}_{ij}$ for $n\ge 1$, becomes that given
in algorithm \ref{alg:generic}. 

\begin{algorithm}[htb]
\caption{Matrix powers for generic network graphs}\label{alg:generic}
\begin{algorithmic}[1]
\REQUIRE An $n \times n$ matrix of trust $C$ of a generic directed trust graph.
\ENSURE Global trust in the network.

\STATE $\ell=2$;
\REPEAT
\FORALL{$(i,j) \in [1..n]^2$ with $i\neq j$}
\STATE $C_{ij}^{\ell}=\langle 0,0,1\rangle$; 
\STATE $R_{ij}^{\ell}  = \emptyset$;
\FOR{$k=1$ \TO $n$}
\STATE $t=C^{\ell-1}_{ik}.C_{kj}$;
\IF{$\left(t\neq \langle 0,0,1\rangle\right)$}
	\STATE $C_{ij}^{\ell} =  C_{ij}^{\ell} + t$;
      	\STATE $R_{ij}^{\ell}  = R_{ij}^{\ell} \bigcup R_{ik}^{\ell-1} \bigcup (k\to
j)$; // using a sorted list union
\ENDIF
\ENDFOR
\IFTHEN{$\left(\#R_{ij}^{\ell} \subset \#R_{ij}^{\ell-1}\right)$}{$C_{ij}^{\ell} = C_{ij}^{\ell-1} $;$R_{ij}^{\ell} = R_{ij}^{\ell-1} $;}
\ENDFOR
\UNTIL { $C^\ell == C^{\ell-1}$; ++$\ell$;}
\RETURN $C^{\ell}$;
\end{algorithmic}
\end{algorithm}

By doing this modification of the matrix multiplication we will obtain a
different trust evaluation. We have no guarantee that this new evaluation is
somewhat more accurate but intuitively as in any path an edge is considered only
once, cycles will not have a preponderate effect.
Furthermore, we recover an interesting fixed point.  

\begin{theorem}\label{thm:gencv}
Let $C$ be the trust matrix corresponding to a generic trust graph. Algorithm
\ref{alg:generic} converges to the matrix $C^\ell$ where $\ell$ is the longest
{\em acyclic path} between vertices.
\end{theorem}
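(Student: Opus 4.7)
My plan is to combine a monotonicity argument on the memory matrix $R^\ell$ with the finiteness of the edge set $E$, together with an induction on path length that identifies the stopping iteration with the longest acyclic path.

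The first step is to establish that $R^\ell_{ij}$ is monotonically non-decreasing in $\ell$: the final revert clause of algorithm~\ref{alg:generic} detects whether the newly-built $R^\ell_{ij}$ fails to strictly extend $R^{\ell-1}_{ij}$, and if so resets both $R^\ell_{ij}$ and $C^\ell_{ij}$ to their previous values. Since every $R^\ell_{ij}$ is a subset of the finite edge set $E$ with $|E|=\varphi$, each sequence must stabilize after at most $\varphi$ genuine updates. When every pair $(i,j)$ has reached a step $L$ with $R^L_{ij}=R^{L-1}_{ij}$, the revert clause fires for every pair, which forces $C^L_{ij}=C^{L-1}_{ij}$; this is exactly the \textbf{Until} condition, so the outer loop halts.

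The second step is to identify the stopping iteration with the longest acyclic path length. I would prove, by induction on $\ell$, that $R^\ell_{ij}$ is exactly the set of edges appearing in some acyclic path of length at most $\ell$ from $i$ to $j$. The base $\ell=2$ follows by inspection of the union $R^2_{ij}=\bigcup_k (R^1_{ik}\cup\{(k\to j)\})$ taken over the contributing $k$. The inductive step mirrors the derivation in theorem~\ref{thm:length} and uses lemma~\ref{lem:maxpath}, with the refinement that the revert clause ensures only strictly new acyclic extensions are retained. Consequently, once $\ell$ exceeds the length of the longest acyclic path in $G$, no $R^\ell_{ij}$ can acquire a fresh edge, the revert clause triggers for every pair, and $C^\ell=C^{\ell-1}$.

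The main obstacle I expect is the subtle interplay between the algorithm's coarse global ``$R^\ell_{ij}$ failed to grow'' test and the finer graph-theoretic notion of an individual path being acyclic. A priori, a single walk from $i$ to $j$ could re-use an edge while the global $R^\ell_{ij}$ still grows via a different walk; conversely, $R^\ell_{ij}$ could fail to grow even though some new acyclic walk of length $\ell$ exists, if it happens to traverse only previously-recorded edges. Carefully arguing that every genuinely new acyclic path of length $\ell$ contributes at least one new edge to $R^\ell_{ij}$ — and hence that the revert clause is triggered exactly when all acyclic paths have been exhausted — is the delicate part, requiring a close reading of how the monoid operations ``$+$'' and ``$\cdot$'' interact with the union semantics of the memory matrix in definition~\ref{def:matprod}.
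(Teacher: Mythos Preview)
Your plan is essentially the paper's own argument: after $\ell$ iterations every edge on any $i$--$j$ path sits in $R^\ell_{ij}$, so the revert clause fires for every pair and $C^{\ell+1}=C^\ell$; conversely, while $x<\ell$ some acyclic subpath of length $x$ contributes a fresh edge to the relevant $R^x_{uv}$, preventing global stabilisation. The obstacle you single out is genuine and the paper does not resolve it any more carefully than you do --- it simply asserts that a new edge in $R^x_{uv}$ forces $C^x_{uv}\neq C^{x-1}_{uv}$ without disentangling the per-pair revert test from per-walk acyclicity --- so be prepared to treat that step at the same informal level, noting that for the upper bound on the convergence index your one-sided inclusion (that $R^\ell_{ij}$ \emph{contains} every edge on an acyclic $i$--$j$ path of length $\leq\ell$) already suffices, and the exact-equality characterisation you propose is stronger than what is needed or what the paper claims.
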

\begin{proof}
Let $C^\ell$ be the evaluation of the $\ell$-aggregated trust between
all entity pairs after $\ell$ iterations where $\ell$ is the longest
acyclic path between vertices.
At this stage, for each pair $i,j$, all the edges belonging to a path
between $i$ and $j$ will be marked in $R^\ell_{ij}$. Therefore, no new
$t=C^{\ell}_{ik}.C_{kj}$ will be added to $C^{\ell+1}_{ij}$.
Conversely, at iteration $x <  \ell$, if there exist an acyclic path
between a pair $i,j$ of length greater than $x$, then it means that
there exists at least one edge $e$ not yet considered on a sub-path from, say, $u$ to $v$, of length $x$: $i \dashrightarrow u \dashrightarrow \overset{e}{\rightarrow} \dashrightarrow v \dashrightarrow j$.
Then $R^x_{uv}$ will be different from $R^{x-1}_{uv}$ and so will be $C_{uv}^x$ from $C_{uv}^{x-1}$.
\end{proof}

\begin{theorem}
Let $C$ be the trust matrix corresponding to a generic trust graph
with $n$ vertices and $\varphi$ edges whose longest path
between vertices is of size $\ell$.
The complexity of the global evaluation of all the paths between any
entity is bounded by $O(n^3\cdot\varphi\cdot\ell)$ operations.
\end{theorem}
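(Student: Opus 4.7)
The plan is to bound each of the three nested control structures of Algorithm~\ref{alg:generic} separately, then to bound the cost of one iteration of the innermost FOR loop, and finally to multiply.

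First, I would invoke Theorem~\ref{thm:gencv}: the outer REPEAT loop terminates as soon as $C^\ell = C^{\ell-1}$, and that happens no later than the iteration at which $\ell$ reaches the length of the longest acyclic path between vertices. So the outer loop contributes at most $\ell$ iterations. Per outer iteration, the algorithm walks through every ordered pair $(i,j)\in[1..n]^2$ with $i\neq j$, contributing a factor $n^2-n = O(n^2)$. For each such pair, the inner FOR on $k$ contributes another factor $n$. So far we are at $O(\ell\cdot n^3)$ executions of the innermost block.

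The missing factor $\varphi$ comes from the cost of a single innermost step. The critical observation is that each memory set $R_{ij}^{\ell}$ is a collection of edges belonging to some traversal from $i$ to $j$, hence a subset of $E$; therefore $\#R_{ij}^{\ell} \leq \varphi$ for every $\ell$, $i$, $j$. Given that bound, the individual operations cost as follows: the trust monoid operations ``.'' and ``+'' on triples of $\T$ are $O(1)$, so the update of $C_{ij}^\ell$ is constant-time; the sorted-list union $R_{ij}^\ell \cup R_{ik}^{\ell-1}\cup\{k\to j\}$, implemented as a merge over a fixed indexing of $E$, costs $O(\varphi)$; and the inclusion test $\#R_{ij}^\ell\subset\#R_{ij}^{\ell-1}$ also costs $O(\varphi)$. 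Hence one innermost step costs $O(\varphi)$. It remains to note that the termination test $C^\ell==C^{\ell-1}$ compares $n^2$ trust triples in $O(n^2)$, which is absorbed.

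Multiplying everything, one outer iteration costs $O(n^2\cdot n\cdot\varphi)=O(n^3\varphi)$, and there are at most $\ell$ outer iterations, yielding the claimed bound $O(n^3\cdot\varphi\cdot\ell)$. The only subtle point (and the main thing to be careful about) is the $O(\varphi)$ bound on the sorted-list union, which depends on maintaining each $R_{ij}^\ell$ as a sorted list of edge indices with respect to a global enumeration of $E$; under this convention the merge is linear in the sum of the sizes, each of which is at most $\varphi$.
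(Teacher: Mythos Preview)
Your proposal is correct and follows essentially the same approach as the paper: bound the outer loop by $\ell$ via Theorem~\ref{thm:gencv}, observe that the triple nested loops contribute $O(n^3)$ inner steps per outer iteration, and charge $O(\varphi)$ per inner step for the sorted-list merge. Your write-up is in fact more careful than the paper's (you explicitly justify $\#R_{ij}^\ell\le\varphi$, account for the inclusion test and the termination test, and spell out the data-structure assumption underlying the linear merge), but the skeleton of the argument is the same.
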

\begin{proof}
Using algorithm \ref{alg:generic}, we see that the triple loop induces
$n^3$ monoid operations and $n^3$ merge of the sorted lists of edges.
A merge of sorted lists is linear in the number of edges, $\varphi$.
Then the overall iteration is performed at most $\ell$ times from
theorem \ref{thm:gencv}.
\end{proof}

By applying the new algorithm on the example of figure \ref{fig:TM12345},
we still obtain $C^5_{1,3}=(a+a.b.c.d).b$, but now $R^5_{1,3}=\{a,b,c,d\}$ and
thus no more contribution can be added to $C^{5+i}_{1,3}$.

A first naive dense implementation of this algorithm took about $1.3$ seconds to perform the first iteration ($C^2$) on the graph of section \ref{sec:metric} with $1000$ vertices and $250K$ edges. And it needed only $7$ iterations to return  the final trust degrees with high precision.  

\begin{remark}\label{rq:dominant}
More generally, the convergence will naturally be linked to the ratio of the
dominant and subdominant eigenvalues of the matrix. For instance on some random matrices it can be shown that this ratio is $O(\sqrt{n})$~\cite[Remark 2.19]{Goldberg:2000:mcap}, and this is what we had experimentally. 
Of course, for, e.g., PKI trust graphs, more studies on the structure of the typical network would have to be conducted.
\end{remark}

\subsection{Bounded evaluation of trust}
In practice, the evaluation of trust between two
nodes $A$ and $B$ need not consider all trust paths connecting $A$ to
$B$ for two reasons:
\begin{itemize}
\item  First, the mitigation is one of the trust properties, i.e. the
  trust throughout trust paths decreases with the length of the
  latter.
  Therefore after a certain length $L$, the trust on paths becomes
  weak and thus should have a low contribution in improving the trust
  degree after their parallel aggregation.
\item  Second, if at some iteration $n \geq 1$, we already obtained a
  high trust degree, then contributions of other paths will only be
  minor.
\end{itemize}
Therefore, it is possible to use the matrix powers algorithm with less
iterations and e.g. a threshold for the trust degree, in order to rapidly
compute a good approximation of the trust in a network. To determine the optimal
threshold, we have conducted hundreds of comparisons between results of each
iterations of our algorithm and the final trust degrees computed with Algorithm
1. We found that on average on $1K$-vertices random matrices, we needed 6
iterations to get an approximation of the trust degrees at 0.01, 
and only 7 iterations\footnote{From remark \ref{rq:dominant}, we see that would
  the product be a classical matrix product, $\sqrt{n}$ being about $32$, to get
  an approximation at $10^{-6}$ we would have needed on the order of
  $6/log_{10}{32} \approx 3.98 \leq 4$ iterations to converge.} 
(in $97\%$ of the cases) to achieve an error rate less than $10^{-6}$.

\section{Conclusion and remarks}
The actual public-key infrastructure models assume that the relationships
between the PKI entities are based on an absolute trust. However, several
risks are related to these assumptions when using PKI procedures.

In this paper, we have reduced the evaluation of trust between entities of a DAG
network to linear algebra. This gives a polynomial algorithm to asses the global
trust evaluation in  a network.
Moreover, depending on the sparsity of the considered graphs, this enables to
use adapted linear algebra methods. Also the linear algebra algorithm decomposes
the evaluation into converging iterations. These iterations can be terminated
earlier than convergence in order to get a good approximation faster. 
Finally this enabled us also to generalize the trust evaluation to any directed
graph, i.e. not necessarily acyclic, still with a polynomial complexity.

Further work include the determination of
the optimal number of iterations necessary to get a good approximation
of trust in minimal time. Small-world theory for instance could be of
help~\cite{Schnettler:2009:smallworld,Albert:1999:Diameter}.

Also, there is a restriction in the current trust models, which imposes
that Alice cannot communicate with Bob if there is no trust path between them.
However, this limitation can be overcome by using the reputation notion
jointly with the trust notion. If Alice can be sure that Bob has a good
reputation in its friends circle and vice versa, then they can extend their
trusty network and communicate safely \cite{Schiffner:2009:reputation}.

Finally, the choice of the implementation model is a crucial subject. One
approach is to adopt a centralized model, where all computation are done by a
unique trusty entity. Then the reliability of the system depends entirely on the
reliability of the trusted entity. This would typically be achieved by CAs.
Another approach is a distributed model, where the entities must contact each
others to share some trust degrees. This will enable each entity to evaluate the
(at least {\em local}) trust in its neighborhood. On the one hand, this can be
applied to large networks while preserving for each entity a low computational
cost. On the other hand, each entity might have only a limited view of the whole
network. Therefore, a dedicated Network Discovery Mechanism (NDM) is needed to
expand the entities trust sub-network. This NDM can be crucial to determine
the trust model safety \cite{Govindan:2012:surveytrust}.
Besides, the trust degrees could be a sensitive information. Therefore, the
join use of trust matrices and homomorphic cryptosystems enabling a
private computation of shared secret could be useful.

\bibliographystyle{abbrv}
\bibliography{pkibib}

\end{document}